\newcommand{\Aut}{\mathrm{Aut}}
\newcommand{\Sym}{\mathrm{Sym}}
 \newtheorem{test}[theorem]{Test}
\title{Ternary graph isomorphism in polynomial time, after Luks}
\author{Adri\`{a} Alcal\'{a} Mena , Francesc Rossell\'o}
\institute{Dept. Applied Mathematics and Computer Science , Univ. of Cantabria , 39005 Santander, Spain
\\
Dept. Mathematics and Computer Science, Univ. of the Balearic Islands, 07122 Palma de Mallorca, Spain}
\begin{document}
\maketitle
\begin{abstract}
The  graph isomorphism problem has a long history in mathematics and computer science, with applications in computational chemistry and biology, and it is believed  to be neither solvable in polynomial time nor NP-complete. E. Luks proposed in 1982 the best  algorithm so far for the solution of this problem, which moreover runs in polynomial time if an upper bound for the degrees of the nodes in the graphs is taken as a constant. Unfortunately, Luks' algorithm is purely theoretical, very difficult to use in practice, and, in particular, we have not been able to find any implementation of it in the literature. 
The main goal  of this paper is to present an efficient implementation of this algorithm for ternary graphs in the SAGE system, as well as an adaptation to  fully resolved rooted phylogenetic networks on a given set of taxa. 
\end{abstract}
\section{Introduction}

The \emph{graph isomorphism problem} ---the  problem of deciding whether two finite graphs are isomorphic or not--- has a long history in mathematics and computer science. It is  one of the most important decision problems for which the computational complexity is not known yet~\cite{goldberg03,KST93}, and it is believed to be neither in P nor NP-complete.  Its uncertain status has led to the definition and study of the class GI of  the decision problems  that are polynomial-time Turing reducible  to it \cite{Hoff82}.

The best current theoretical algorithm for the solution of the graph isomorphism problem is due to Eugene Luks \cite{Luks81,Luks83}. His algorithm runs in time $2^{O(\sqrt{n}\log(n))}$, where $n$ is the number of  nodes in the graphs, and it relies on the  classification of finite simple groups \cite{FSG}. Without using this major  result,
whose proof consists of tens of thousands of pages in several hundred journal articles  and for which a unified, simplified and revised version is still in progress,  Babai and Luks  \cite{BabLuks83} gave an algorithm that runs in $2^{O(\sqrt{n}\log(n)^2)}$ time. 
   
The graph isomorphism problem has many practical applications outside mathematics. Our interest in it stems from its applications in computational biology and bioinformatics. Graphs are ubiquitous in biology as models of different complex systems: molecular structures, phylogenetic trees and networks, metabolical pathways, protein-protein interaction (PPI) networks, gene expression networks, etc. \cite{AS,Hub,Huson,MV}. In all these fields, the comparison of  graphs is an important computational problem. For instance, 
a protein's function is closely related to its three-dimensional shape, which at an intermediate level of detail is  modeled by a contact graph, and thus the comparison of such contact graphs is a key tool in the prediction of  proteins' function \cite{GI,TA}; the comparison of phylogenies is acknowledged to be one of the main problems in phylogenetics \cite{fel:04,Huson,steelpenny:sb93};  and the comparison of metabolic pathways and PPI networks is currently one of the hot tools in the study of evolution \cite{FS,KK}. The most basic graph comparison problem  in all these contexts is, of course, the  detection of equalities. Without being able to decide efficiently whether two such graphs are isomorphic or not, we cannot expect to define metrics, matchings or alignments that can be computed efficiently. 

So,  it is important to design efficient algorithms to test if two graphs are isomorphic, at least for some special classes of graphs with practical applications. The aforementioned algorithm by Luks \cite{Luks83} is fixed-parameter tractable, in the sense that it runs in polynomial time if we consider an upper bound $d$ for the degrees of the graphs under comparison as a constant. Such an algorithm, even for \emph{ternary graphs} (graphs with all their nodes of degree at most 3), has relevant applications in computational biology: for instance, in phylogenetics, in the comparison of arbitrary  fully resolved rooted phylogenetic networks, and in the comparison of split networks \cite{Huson}; in both types of  networks, all nodes have total degrees at most 3.  For these networks, no specific polynomial-time isomorphism test has been devised yet, and the authors working with them simply quote Luks' result when they need to state that their isomorphism can be decided in polynomial time; see, for instance, \cite[
p. 168]{Huson} or \cite[\S V]{Nak}. 

The problem with Luks' algorithm  is that it is purely theoretical, very difficult to use in practice, and, in particular, we have not been able to find any implementation of it in the literature.  The main goal of this paper is, then, to present an efficient implementation of this algorithm in the SAGE system. We also present an implementation of an adaptation of this algorithm that solves the isomorphism problem for fully resolved rooted phylogenetic networks.
This paper is based on the first author's Master Thesis  \cite{MasterThesis}, where we refer the reader for more details.

\section{Luks' algorithm for  ternary graphs}

In this section we explain Luks' algorithm for \emph{ternary} graphs, that is, for connected graphs with all their nodes of degree at most $3$.  We shall omit all proofs, which can be found in the original paper by Luks \cite{Luks83} or, with more detail, in the first author's Master Thesis  \cite{MasterThesis}.
 
We begin by reducing the graph isomorphism problem to a  problem about isomorphisms of groups.  Given two ternary graphs $X_1=(V_1,E_1)$ and $X_2=(V_2,E_2)$, and two edges $e_1\in E_1$ and $e_2\in E_2$, consider the ternary graph 
$X=\mathtt{BuildX}(X_1, X_2, e_1, e_2)$ obtained by splitting $e_1$ and $e_2$ by means of new nodes $v_1$ and $v_2$, respectively, and then connecting  these new nodes $v_1$ and $v_2$ by means of a new edge $e$. Then, $X_1$ and $X_2$ are isomorphic if, and only if, given any $e_1\in E_1$, there exist some $e_2\in E_2$ and some automorphism $\sigma$ of $X$ such that $\sigma(v_1)=v_2$, and in particular such that $\sigma(e)=e$.
And then, if any such automorphism of $X$ does exist, then any set of generators of $\mathrm{Aut}_e(X)$, the group of automorphisms of $G$ that fixes $e$, will contain some.
This provides the following algorithm.

\begin{algorithm}\label{IsotoAute}\hypertarget{IsotoAute}{}
\KwData{$X_1, X_2$ ternary graphs}
\KwResult{Test if $X_1$ and $X_2$ are isomorphic}
\SetKwFunction{BuildX}{BuildX}
\Begin{
$e_1 \in E(X_1)$ \\
\For{$e_2 \in E(X_2)$}{
	$X \leftarrow $ \BuildX{$X_1, X_2, e_1, e_2 $}\\
	$S\leftarrow$ a set of generators of $\mathrm{Aut}_e(X)$\\
	\For{ $\sigma \in S $}{
	    \If{ $\sigma(v_1) == v_2 $}{
		\Return{ True}
	    }
	}
    }

\Return{False}
}
\caption{Isomorphism of ternary graphs}

\end{algorithm}

So, we are reduced to compute a set of generators of $\Aut_e(X)$, for 
a ternary graph $X$ and any $e\in E(X)$. This group is determined through a natural sequence of successive ``approximations'' $\Aut_e(X_r)$, where  each $X_r$ is the 
subgraph consisting of all nodes and all edges of $X$ which appear in paths of length $\leq r$ containing $e$. More formally, if $e= (a,b)$, let
$$
X_1=(\{a,b\},\{(a,b)\})
$$ 
and, for every $r\geq 2$,
$$
\begin{array}{l}
V ( X_r) = \{ y \in V( X)\mid \exists\, x \in V(X_{r-1} ) \mbox{ such that } (x,y) \in E(X) \} \\
E(X_r) =  \{ ( x,y) \in E( X) \mid   \exists\, x \in V(X_{r-1} ) \mbox{ such that } (x,y) \in E(X) \}
\end{array}
$$
There exist  natural homomorphisms
$$
 \pi_r : \Aut_e(X_{r+1}) \rightarrow \Aut_e(X_r)
$$
defined by the restriction of the automorphisms. These homomorphisms allow us to construct  recursively a generating set for $\Aut_e(X_{r+1})$ from one for $\Aut_e(X_r)$ by solving, for every $r\geq 1$,  the following two problems:
\begin{enumerate}
\item[(I)] Compute a set $\mathcal{K}_{r}$ of generators of $\ker \pi_r$.
\item[(II)] Compute a set $\mathcal{S}_{r+1}$ of generators of $\pi_r(\Aut_e(X_{r+1}))$. 
\end{enumerate}
Indeed, if $\mathcal{S}_{r+1}'$ is any set of pre-images  of $\mathcal{S}_{r+1}$ in $\Aut_e(X_{r+1})$, then  $\mathcal{K}_{r} \cup \mathcal{S}_{r+1}'$ generates
 $\Aut_e(X_{r+1})$.

So, if we know how to solve  Problems I and II,  the algorithm to compute a set of generators of $\Aut_e(X)$ is the following. In it, and in the sequel, given two elements $a,b$ of a set $Y$, we denote by $(a\; b)\in \Sym(Y)$ the transposition $a\leftrightarrow b$.

\begin{center}
 
\begin{algorithm}\label{Aute}\hypertarget{Aute}{}
\KwData{A ternary graph $X$ and an edge $e$}
\KwResult{A set of generators of $\Aut_e(X)$}
\Begin{
Compute the sequence of subgraphs $X_1\subseteq \cdots \subseteq X_N=X$ as before\\
$\Aut_e = \{(a\; b)\}$\\
\While{$r< N$}{
	$K \leftarrow$ a set of generators of $\ker \pi_r$ \\
	$S \leftarrow $ a set of generators of $\pi_r(\Aut_e(X_{r+1}))$ \\
	$S' \leftarrow $ a set of pre-images of $S$ in $\Aut_e(X_{r+1})$\\
	$\Aut_e = S' \cup K$\\
	$r=r+1$
    }
\Return{$\Aut_e$}
}
\caption{A set of generators of $\Aut_e(X)$}

\end{algorithm}
 \end{center}

Set now  $V_{r+1} = V(X_{r+1}) \setminus V(X_{r})$ and $A_r=\{a\subseteq V_r\mid 0<|a|\leq 3\}$, and consider the mapping
$$
 f: V_{r+1} \rightarrow A_r
$$
defined by $f(v)= \{ w \in V(X_r) \mid (v,w) \in E(X) \}$; we call $f(v)$ the \emph{neighbor set} of $v$. 
The following result solves Problem I.

\begin{proposition}\label{K}
$\mathcal{K}_{r}=\{(u\; v)\in \Sym (V_{r+1})\mid u\neq v,\ f(u)=f(v) \}.$
\end{proposition}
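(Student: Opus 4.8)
The plan is to describe $\ker\pi_r$ completely and explicitly as a direct product of symmetric groups acting on the fibers of the neighbor-set map $f$, and then to invoke the elementary fact that such a product is generated exactly by the transpositions lying inside each factor. Matching those transpositions against the set on the right-hand side will give the claim. So the real content is the identification of the kernel, and the generation statement will follow for free.

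The first and decisive step is a structural observation about $X_{r+1}$. By the defining condition on its edge set, every edge of $X_{r+1}$ has at least one endpoint in $V(X_r)$; hence no edge of $X_{r+1}$ joins two vertices of $V_{r+1}=V(X_{r+1})\setminus V(X_r)$. In other words, within $X_{r+1}$ the new vertices form an independent set, each one attached to the rest of the graph solely through its neighbor set $f(v)\subseteq V(X_r)$. I expect this to be the main point to get right: it is precisely what guarantees that shuffling new vertices with a common neighbor set cannot destroy an edge, and without it the kernel would in general be larger and harder to describe than a product of symmetric groups.

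Next I would characterize the kernel using this observation. An element $\sigma\in\Aut_e(X_{r+1})$ lies in $\ker\pi_r$ exactly when it fixes $V(X_r)$ pointwise (and so permutes $V_{r+1}$). For such a $\sigma$, and for $v\in V_{r+1}$, $w\in V(X_r)$, one has $(v,w)\in E(X_{r+1})$ if and only if $(\sigma(v),w)=(\sigma(v),\sigma(w))\in E(X_{r+1})$, so $\sigma$ necessarily preserves neighbor sets, i.e. $f(\sigma(v))=f(v)$. Conversely, any permutation of $V_{r+1}$ preserving $f$, extended by the identity on $V(X_r)$, is an automorphism of $X_{r+1}$: the edges inside $X_r$ are untouched, the edges between $V_{r+1}$ and $V(X_r)$ are preserved because $f$ is, and by the structural observation there are no other edges to verify. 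This identifies $\ker\pi_r$ with the group of $f$-preserving permutations of $V_{r+1}$, that is, with the direct product $\prod_{S}\Sym\bigl(f^{-1}(S)\bigr)$ taken over the distinct neighbor sets $S$.

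Finally I would conclude. A direct product of symmetric groups on pairwise disjoint sets is generated by the transpositions contained in its individual factors, and here these are exactly the transpositions $(u\;v)$ with $u\neq v$ and $f(u)=f(v)$. Each of them genuinely belongs to $\ker\pi_r$, since it moves only new vertices and therefore fixes $V(X_r)$ (in particular the endpoints of $e$) and preserves $f$ by construction. Hence $\mathcal{K}_r=\{(u\;v)\in\Sym(V_{r+1})\mid u\neq v,\ f(u)=f(v)\}$ is a generating set for $\ker\pi_r$, as asserted.
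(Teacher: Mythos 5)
Your proof is correct and complete: the paper itself omits the proof of this proposition (deferring to Luks' original article and the first author's thesis), and your argument --- identifying $\ker \pi_r$ with the direct product of the symmetric groups on the fibers of $f$, justified by the key observation that every edge of $X_{r+1}$ has an endpoint in $V(X_r)$, so that $V_{r+1}$ is an independent set --- is precisely the standard argument used in that source. No gaps.
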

Moreover, the fact that each $\ker \pi_r$ is generated by transpositions implies, by induction, the following result, which will be useful later.

\begin{proposition}\label{Tutte}
 For each $r$, $\Aut_e(X_r)$ is a $2$-group.
\end{proposition}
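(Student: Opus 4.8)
The plan is to argue by induction on $r$, exactly as the remark preceding the statement suggests, using the short exact sequence induced by $\pi_r$ together with Proposition~\ref{K}. For the base case $r=1$ we have $X_1=(\{a,b\},\{(a,b)\})$, whose only automorphisms fixing the edge $e=(a,b)$ are the identity and the transposition $(a\; b)$; hence $\Aut_e(X_1)=\langle(a\; b)\rangle$ has order $2$ and is a $2$-group. For the inductive step, I would assume $\Aut_e(X_r)$ is a $2$-group and consider the homomorphism $\pi_r\colon \Aut_e(X_{r+1})\to\Aut_e(X_r)$, which yields the short exact sequence
\[
1\longrightarrow \ker\pi_r\longrightarrow \Aut_e(X_{r+1})\stackrel{\pi_r}{\longrightarrow}\pi_r(\Aut_e(X_{r+1}))\longrightarrow 1 .
\]
Since $\pi_r(\Aut_e(X_{r+1}))$ is a subgroup of the $2$-group $\Aut_e(X_r)$, it is itself a $2$-group, and an extension of a $2$-group by a $2$-group has order a product of two powers of $2$, hence is again a $2$-group. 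So the whole argument reduces to proving that $\ker\pi_r$ is a $2$-group.

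For this I would invoke Proposition~\ref{K}, which presents $\ker\pi_r$ as the group generated by the transpositions $(u\; v)$ of $V_{r+1}$ with $u\neq v$ and $f(u)=f(v)$. Now $f(u)=f(v)$ is an equivalence relation on $V_{r+1}$, partitioning it into fibers $C_1,\dots,C_k$; the transpositions internal to a single fiber $C_i$ generate $\Sym(C_i)$, while transpositions from distinct fibers move disjoint sets of nodes and therefore commute, so $\ker\pi_r\cong\Sym(C_1)\times\cdots\times\Sym(C_k)$. It thus suffices to bound each fiber, the key claim being that $|C_i|\le 2$ for every $i$. Granting this, each $\Sym(C_i)$ is trivial or cyclic of order $2$, the nontrivial generators are pairwise disjoint transpositions, and $\ker\pi_r$ is in fact an elementary abelian $2$-group (more than enough for our purposes).

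The heart of the proof, and the step where the ternary hypothesis is indispensable, is the bound $|C_i|\le 2$. Here I would fix a fiber $C$ with common neighbor set $f(v)=a$ for all $v\in C$, choose any $w\in a\subseteq V_r$, and observe two facts about $w$. First, every element of $C$ is adjacent to $w$ and lies in $V_{r+1}=V(X_{r+1})\setminus V(X_r)$. Second, by the very definition of the filtration, membership $w\in V(X_r)$ forces $w$ to have at least one neighbor in $V(X_{r-1})$ when $r\geq 2$, and (when $r=1$) the opposite endpoint of $e$ plays this role; in either case $w$ has a neighbor that does \emph{not} lie in $V_{r+1}$. Since $X$ is ternary we have $\deg(w)\le 3$, so $w$ can have at most two neighbors in $V_{r+1}$, whence $|C|\le 2$. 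This gives the fiber bound, hence that $\ker\pi_r$ is a $2$-group, and together with the short exact sequence above it closes the induction. I expect the only genuine obstacle to be this degree bookkeeping; the group-theoretic scaffolding (extensions of $2$-groups, subgroups of $2$-groups) is routine once the kernel has been pinned down.
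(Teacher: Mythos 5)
Your proof is correct and follows exactly the route the paper indicates: the paper omits all proofs, but the remark preceding Proposition~\ref{Tutte} points to precisely this induction on $r$ using the transposition generators of $\ker\pi_r$ from Proposition~\ref{K}, which is also Luks' original argument. In fact you supply the one ingredient that remark glosses over (being generated by transpositions alone would not suffice --- think of $\Sym(3)$): namely that ternarity, together with the fact that every $w\in V_r$ already has a neighbor outside $V_{r+1}$, bounds each fiber of $f$ by $2$, making $\ker\pi_r$ an elementary abelian $2$-group.
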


So, our graph isomorphism problem is reduced to solve Problem II. Now, consider the following  three subsets of $A_r$:
$$
\begin{array}{l}
A_r' = \big\{ \{v_1, v_2\} \in A_r\mid  ( v_1, v_2 ) \in  E(X_{r+1}) \big\}\\
A_{r,1} = \big\{ a \in A_r\mid  a = f(v) \mbox{ for some unique } v \in V_{r+1} \big\}\\
A_{r,2} = \big\{ a \in A_r\mid  a=f(v_1) = f(v_2 ) \mbox{ for some } v_1  \neq v_2 \big\}
\end{array}
$$
We have the following result.

\begin{proposition}
$\pi_r(\Aut_e(X_{r+1}))$ is the set of those $\sigma \in \Aut_e (X_r)$ that stabilize the sets $A_{r,1}$,  $A_{r,2}$, and $A_r'$.
\end{proposition}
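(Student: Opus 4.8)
The plan is to characterize the image of the restriction homomorphism $\pi_r$ directly: a $\sigma\in\Aut_e(X_r)$ lies in $\pi_r(\Aut_e(X_{r+1}))$ exactly when it \emph{extends} to an automorphism of $X_{r+1}$ fixing $e$. Throughout I let $\sigma$ act on $A_r$ by $\sigma(a)=\{\sigma(w)\mid w\in a\}$; this is a well-defined permutation of $A_r$, since $\sigma$ permutes $V(X_r)$ preserving cardinalities. Two structural observations organize the whole argument. First, writing $F_a=f^{-1}(a)$ for the fiber of the neighbor-set map over $a$, the image of $f$ is the disjoint union $A_{r,1}\sqcup A_{r,2}$, and $V_{r+1}=\bigsqcup_{a\in\mathrm{Im}(f)}F_a$. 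Second, the edges of $X_{r+1}$ fall into exactly three kinds: edges inside $V(X_r)$, which are recorded (as $2$-subsets) by $A_r'$; the bipartite edges between $V(X_r)$ and $V_{r+1}$, which are recorded by $f$; and there are \emph{no} edges with both endpoints in $V_{r+1}$, since by construction every edge of $X_{r+1}$ has an endpoint in $V(X_r)$. This last point is what makes $A_{r,1},A_{r,2},A_r'$ a \emph{complete} record of the relevant adjacency.

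For the inclusion $\pi_r(\Aut_e(X_{r+1}))\subseteq\{\sigma : \sigma\text{ stabilizes }A_{r,1},A_{r,2},A_r'\}$, I would take $\widetilde\sigma\in\Aut_e(X_{r+1})$ with $\pi_r(\widetilde\sigma)=\sigma$. Since $X_r$ is intrinsically determined inside $(X_{r+1},e)$ (it is spanned by the paths of length $\le r$ through $e$), $\widetilde\sigma$ preserves $V(X_r)$ and hence permutes $V_{r+1}$. The key identity is $f(\widetilde\sigma(v))=\sigma(f(v))$ for all $v\in V_{r+1}$, which is immediate from $\widetilde\sigma$ preserving the edges of $X_{r+1}$ and restricting to $\sigma$ on $V(X_r)$. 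Consequently $\widetilde\sigma$ restricts to a bijection $F_a\to F_{\sigma(a)}$, so $|F_a|=|F_{\sigma(a)}|$; this forces $\sigma$ to preserve both $A_{r,1}$ (fibers of size $1$) and $A_{r,2}$ (fibers of size $\ge 2$). That $\sigma$ stabilizes $A_r'$ follows because $\{w_1,w_2\}\in A_r'$ iff $(w_1,w_2)\in E(X_{r+1})$ iff $(\sigma(w_1),\sigma(w_2))\in E(X_{r+1})$.

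The substantive direction is the converse. Given $\sigma$ stabilizing the three sets, I would build an extension $\widetilde\sigma=\sigma\cup\tau$, where $\tau$ is a permutation of $V_{r+1}$ constructed so that $f(\tau(v))=\sigma(f(v))$. By the three-kinds observation, any such $\tau$ automatically yields an automorphism of $X_{r+1}$ fixing $e$: the intra-$V(X_r)$ edges are preserved because $\sigma$ stabilizes $A_r'$ (and is already an automorphism of $X_r$), the bipartite edges are preserved by the defining property of $\tau$, and there are no edges internal to $V_{r+1}$ to worry about. To construct $\tau$ I would work fiber by fiber: since $\sigma$ permutes $A_{r,1}$ and $A_{r,2}$ it permutes $\mathrm{Im}(f)$, and it suffices to choose, for each $a$, a bijection $F_a\to F_{\sigma(a)}$; over $A_{r,1}$ this bijection is forced (both fibers are singletons), and over $A_{r,2}$ any bijection works \emph{provided the two fibers have equal size}.

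Matching the fiber sizes over $A_{r,2}$ is the step I expect to be the main obstacle, and it is exactly where the ternary hypothesis must be used. The clean way through is a degree lemma: every $w\in V(X_r)$ has at least one neighbor lying in $V(X_r)$ --- namely the vertex of $V(X_{r-1})$ that introduced it (for $r=1$, the other endpoint of $e$) --- so, since $\deg_X(w)\le 3$, the vertex $w$ has at most $2$ neighbors in $V_{r+1}$. As every $v\in F_a$ is adjacent to each $w\in a$, this gives $|F_a|\le 2$ for all $a$, whence every fiber over $A_{r,2}$ has size \emph{exactly} $2$. Thus the required bijections $F_a\to F_{\sigma(a)}$ always exist, the construction of $\tau$ goes through, and $\sigma$ extends. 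I would double-check only the boundary behaviour at $r=1$ and the precise reading of the recursive definition of $E(X_{r+1})$, since both the ``no internal $V_{r+1}$ edges'' reduction and the degree lemma rely on the facts that every edge of $X_{r+1}$ meets $V(X_r)$ and that every vertex of $V(X_r)$ already has a neighbor within $V(X_r)$.
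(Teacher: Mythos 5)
Your proof is correct, and it identifies the right crux: the restriction of any lift satisfies $f(\widetilde\sigma(v))=\sigma(f(v))$, hence preserves fiber sizes and new edges; conversely a color-preserving $\sigma$ extends fiber-by-fiber because the degree-$\le 3$ bound (each $w\in V_r$ already has a neighbor in $V(X_{r-1})$, so at most two in $V_{r+1}$) forces every fiber over $A_{r,2}$ to have size exactly $2$, so the needed bijections $F_a\to F_{\sigma(a)}$ exist. Note that this paper deliberately omits all proofs, deferring to Luks' original article and the first author's thesis, so there is no in-paper proof to compare against; your argument is essentially the standard one given there, including the correct observation that $X_{r+1}$ has no edges internal to $V_{r+1}$ and that the ternary hypothesis enters exactly at the fiber-size step (for larger degree bounds one must refine the colors by fiber multiplicities).
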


Now, set $B_r = V(X_{r-1}) \cup A_r$ and   extend the action of $\Aut_e(X_r)$ on $V(X_r)$ to $B_r$ in the natural way: if  $a \in A_r$, then $\sigma(a) = \{ \sigma(w)\mid  w \in a \}$.  Color each element of $B_r$ with one of five colors that distinguish, on the one hand, whether or not it is in $A_r'$, and, on the other hand, whether it is in $A_{r,1}$, or $A_{r,2}$, or neither. Only five colors are needed, since $A_r'\cap A_{r,2}=\emptyset$.

By the previous proposition,  $\sigma \in \pi_r ( \Aut_e(X_{r+1}))$ if and only if $\sigma$ preserves these colors in $A_r$. 
Therefore, the isomorphism  problem for ternary graphs is polynomial-time  reducible to the following color automorphism problem (taking $G=\Aut_e(X_{r})$, $B=A=A_r$, $\sigma=\mathrm{Id}$).

\begin{problem}\label{Problem4}
Given
\begin{itemize}
\item A set of generators for a $2$-subgroup $G$ of the symmetric group $\Sym(A)$ of a colored set $A$
\item A $G$-stable subset $B\subseteq A$ 
\item A permutation $\sigma \in Sym(A)$
\end{itemize}
find  $C_B( \sigma G)$, where, for every $T\subseteq \Sym(A)$
$$
C_B(T)= \{ \tau \in T\mid  \tau \mbox{ preserves the color of every } b \in B \} 
$$
\end{problem}

Now, the following three lemmas are the basis of Algorithm \ref{CB(G)}   that solves Problem \ref{Problem4} in polynomial-time, and thus it completes the graph isomorphism test we were looking for.

\begin{lemma}\label{LeftCoset}
 Let $G$ be a subgroup of $\Sym(A)$, $\sigma \in \Sym(A)$ and $B$ a $G$-stable subset of $A$ such that $C_B( \sigma G)$ is not empty. Then, $C_B(G)$ is a subgroup of $G$ and $C_B( \sigma G)$ it is a left coset of the subgroup $C_B(G)$.
\end{lemma}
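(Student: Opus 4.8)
The plan is to treat the two assertions separately, in both cases exploiting that $B$ is $G$-stable to keep track of where the relevant elements of $B$ are sent; throughout, ``$\tau$ preserves the color of $b$'' means $\mathrm{color}(\tau(b))=\mathrm{color}(b)$.

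First I would check that $C_B(G)$ is a subgroup of $G$. The identity clearly preserves every color, so $C_B(G)\neq\emptyset$. For closure under composition, take $\tau_1,\tau_2\in C_B(G)$ and $b\in B$: since $B$ is $G$-stable and $\tau_2\in G$, we have $\tau_2(b)\in B$, so the hypothesis that $\tau_1$ preserves colors applies to the element $\tau_2(b)$, giving $\mathrm{color}(\tau_1\tau_2(b))=\mathrm{color}(\tau_2(b))=\mathrm{color}(b)$; hence $\tau_1\tau_2\in C_B(G)$. For inverses, take $\tau\in C_B(G)$ and $b\in B$; then $\tau^{-1}(b)\in B$ by $G$-stability, and applying that $\tau$ preserves the color of $\tau^{-1}(b)$ yields $\mathrm{color}(b)=\mathrm{color}(\tau(\tau^{-1}(b)))=\mathrm{color}(\tau^{-1}(b))$, so $\tau^{-1}\in C_B(G)$. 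This is the place where $G$-stability of $B$ is essential: without it the intermediate images might fall outside $B$ and the color conditions could not be chained.

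For the second assertion, since $C_B(\sigma G)$ is nonempty I would fix a witness $\sigma_0\in C_B(\sigma G)$ and prove $C_B(\sigma G)=\sigma_0\,C_B(G)$ by double inclusion. The inclusion $\supseteq$ mimics the closure argument above: for $h\in C_B(G)$ one has $\sigma_0 h\in\sigma G$, and for $b\in B$ the element $h(b)$ lies in $B$, so $\mathrm{color}(\sigma_0 h(b))=\mathrm{color}(h(b))=\mathrm{color}(b)$. For $\subseteq$, given $\tau\in C_B(\sigma G)$ I would write $\tau=\sigma g$ and $\sigma_0=\sigma g_0$ with $g,g_0\in G$; then $h:=\sigma_0^{-1}\tau=g_0^{-1}g\in G$, so it remains only to see that $h$ preserves colors. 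The key computational trick is to read off $\tau=\sigma_0 h$: for $b\in B$, $G$-stability gives $h(b)\in B$, hence $\mathrm{color}(b)=\mathrm{color}(\tau(b))=\mathrm{color}(\sigma_0 h(b))=\mathrm{color}(h(b))$, where the last equality uses that $\sigma_0$ preserves the color of $h(b)\in B$. Thus $h\in C_B(G)$ and $\tau\in\sigma_0\,C_B(G)$.

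The argument involves no hard estimates; the only delicate point is organizing the color-chasing so that every permutation is ever asked to preserve colors only on elements that genuinely belong to $B$, which is exactly what $G$-stability guarantees. I expect the main subtlety to lie in the $\subseteq$ direction of the coset claim, where $\sigma_0$ itself need not belong to $G$ and $B$ need not be $\sigma_0$-stable: the factorization $\tau=\sigma_0 h$ with $h\in G$ circumvents this, since one only ever applies $\sigma_0$ to the image $h(b)$, which does lie in $B$.
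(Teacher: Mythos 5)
Your proof is correct, and it is essentially the canonical argument (the one in Luks' original paper, which this paper cites rather than reproduces, since it omits all proofs): establish that $C_B(G)$ is a subgroup by color-chasing via $G$-stability, then show $C_B(\sigma G)=\sigma_0\,C_B(G)$ for any witness $\sigma_0$. Both your use of $G$-stability to keep intermediate images inside $B$ and the factorization $\tau=\sigma_0 h$ with $h\in G$ in the $\subseteq$ direction are exactly the standard steps; nothing is missing.
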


\begin{lemma}[Furst-Hopcroft-Luks \cite{FHL}]\label{LemaFilter}
 Given a set of generators of a subgroup $G$ of a symmetric group, one can compute in polynomial time  a set of generators of any subgroup of $G$ that is known to have polynomially bounded index in $G$ and for which a polynomial-time membership test is available.
\end{lemma}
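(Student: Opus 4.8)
The plan is to prove this via coset enumeration combined with \emph{Schreier's lemma}, which states that if $G=\langle S\rangle$ and $H\leq G$ has a right transversal $R$ (a system of representatives of the right cosets $Hg$, with the identity $1\in R$ representing $H$ itself), then $H$ is generated by the finite set of \emph{Schreier generators}
$$
\{\, r s\,\overline{rs}^{\,-1} \mid r\in R,\ s\in S \,\},
$$
where $\overline{g}$ denotes the unique element of $R$ lying in the coset $Hg$. Thus it suffices to compute a right transversal $R$ of $H$ in $G$ and then read off these Schreier generators; the membership oracle for $H$ is exactly what lets us recognize when two elements lie in the same coset, and the index bound guarantees that $R$, and hence the number of Schreier generators, stays polynomially bounded.

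To compute the transversal I would perform a breadth-first exploration of the \emph{Schreier coset graph}, whose vertices are the right cosets and whose edges join $Hg$ to $Hgs$ for each $s\in S$. Concretely, start with $R=\{1\}$ and, for each representative $r$ already found and each generator $s\in S$, form the product $rs$ and test, using the membership oracle, whether $rs\,r'^{-1}\in H$ for some representative $r'$ recorded so far. If such an $r'$ exists, then $rs$ lies in the known coset $Hr'$ and $rs\,r'^{-1}$ is a new Schreier generator to be collected; if no such $r'$ exists, then $rs$ represents a previously unseen coset and is added to $R$ as a new representative. Repeating until no new representatives appear produces a complete transversal, because $G=\langle S\rangle$ acts transitively on its own right cosets and therefore the coset graph is connected, so every coset is reachable from $H$ along generator-edges (adjoining the inverses of the generators, or equivalently iterating to orbit closure, ensures both directions are explored).

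For the complexity, let $m$ be the polynomial bound on $[G:H]$. The exploration halts once $R$ reaches at most $m$ elements, and for each of the $\leq m$ representatives we process the $|S|$ generators, so at most $m\,|S|$ products $rs$ are examined; consequently the number of Schreier generators collected is also at most $m\,|S|$, which is polynomial. Each representative is stored as an explicit permutation of $A$, so products and inverses cost time polynomial in $|A|$, and locating the coset of a given $rs$ takes at most $m$ membership tests, each polynomial by hypothesis. Hence the whole procedure runs in polynomial time and returns a polynomially sized generating set of $H$.

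The main obstacle, and the point requiring care in the write-up, is the correctness of the transversal computation: one must argue that the breadth-first exploration genuinely discovers \emph{every} coset (transitivity of the right-multiplication action of $G=\langle S\rangle$ together with connectedness of the Schreier graph), and that Schreier's lemma then certifies the collected elements generate all of $H$ rather than merely a proper subgroup. The polynomial-time guarantee rests simultaneously on the two hypotheses — without the index bound the transversal could be exponentially large, and without the membership oracle one could not decide coset equality — so the role played by each hypothesis should be made explicit.
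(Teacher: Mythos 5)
Your proof is correct, and since the paper states this lemma as a black-box citation to Furst--Hopcroft--Luks and omits all proofs, there is no internal argument to diverge from: your route (breadth-first construction of a coset transversal using the membership oracle to detect coset equality, followed by Schreier's lemma, with the index bound controlling both the transversal size and the number of Schreier generators) is precisely the classical argument in the cited source. You also correctly flag the two genuine subtleties --- connectivity of the Schreier graph under generator-edges alone, and the fact that both hypotheses are indispensable for polynomiality --- so nothing is missing.
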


\begin{lemma}[Luks \cite{Luks83}]\label{LemaBlockSystem}
 Given a set of generators of a subgroup $G$ of a symmetric group and a $G$-orbit $B$, one can determine in polynomial time a minimal $G$-block system in $B$.
\end{lemma}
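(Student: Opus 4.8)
The plan is to recast the statement as a closure computation followed by a search. Since $B$ is a single $G$-orbit, $G$ acts transitively on it, and the $G$-block systems in $B$ are precisely the partitions of $B$ induced by $G$-\emph{congruences}: equivalence relations $\sim$ on $B$ satisfying $x\sim y\Rightarrow g(x)\sim g(y)$ for all $g\in G$. A \emph{minimal} (finest nontrivial) block system then corresponds to a minimal nontrivial such congruence, so the whole task is to produce one from the given generating set $S$ of $G$.

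First I would build a subroutine that, for a given pair $\alpha,\beta\in B$, returns the smallest block containing both, i.e.\ the smallest $G$-congruence with $\alpha\sim\beta$. Keep a partition of $B$ in a union--find structure, start by merging $\alpha$ with $\beta$ and leaving all other points as singletons, and repeat the following step: while some generator $g\in S$ and some currently equivalent $x\sim y$ satisfy $g(x)\not\sim g(y)$, merge the classes of $g(x)$ and $g(y)$; stop when no violation remains. Correctness hinges on the observation that the resulting relation is invariant under all of $G$, not merely under $S$: each $g\in S$ is a permutation of the \emph{finite} set $B$ and hence has finite order $m$, so invariance under $g$ propagates along $x\sim y\Rightarrow g(x)\sim g(y)\Rightarrow\cdots\Rightarrow g^{m-1}(x)\sim g^{m-1}(y)$, and since $g^{-1}=g^{m-1}$ we also obtain invariance under $g^{-1}$, hence under $\langle S\rangle=G$. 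Because every merge performed is forced by the congruence condition, the output is the \emph{smallest} $G$-congruence with $\alpha\sim\beta$. For the running time, each merge strictly decreases the number of classes, so at most $|B|-1$ merges occur, and detecting a violation is a single polynomial scan over $S$ and the current classes; the subroutine is therefore polynomial.

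Next I would fix one point $\alpha\in B$ and, for every $\beta\in B\setminus\{\alpha\}$, call the subroutine to obtain the smallest block $\Delta_\beta\ni\alpha,\beta$ together with its induced block system. The key point is that every minimal nontrivial block through $\alpha$ is some $\Delta_\beta$: if $\Delta$ is such a block and $\beta\in\Delta\setminus\{\alpha\}$, then $\Delta_\beta\subseteq\Delta$, and minimality of $\Delta$ forces $\Delta_\beta=\Delta$. Conversely a $\Delta_\beta$ of smallest cardinality among those with $\Delta_\beta\neq B$ is minimal nontrivial, since a strictly smaller nontrivial block $\Delta'\ni\alpha$ would contain some $\gamma\neq\alpha$ with $\Delta_\gamma\subseteq\Delta'$, contradicting minimality of $|\Delta_\beta|$. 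So I would output the block system of a smallest proper $\Delta_\beta$; if every $\Delta_\beta$ equals $B$, then $G$ is primitive on $B$ and only the two trivial block systems exist. Running the polynomial subroutine $|B|-1$ times keeps the whole procedure polynomial.

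I expect the main obstacle to be the first step, and specifically the justification that closing only under the generators $S$ already yields a relation invariant under all of $G$ (the finite-order argument above), together with the invariant that each forced merge is genuinely necessary, so that the computed congruence is provably the \emph{smallest} one rather than merely some $G$-congruence containing the pair. Once this subroutine is certified correct and polynomial, the search in the second step is routine.
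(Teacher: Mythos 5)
Your closure subroutine is sound: the union--find procedure does compute the smallest $G$-congruence identifying $\alpha$ and $\beta$, the finite-order argument showing that closure under the generators already gives invariance under all of $G$ is valid, and everything runs in polynomial time. (The paper itself offers no proof of this lemma --- it quotes Luks --- so the comparison is with what the lemma must deliver to the rest of the paper.) The genuine gap is your reading of ``minimal''. You interpret a minimal block system as the \emph{finest} nontrivial $G$-invariant partition, and accordingly you return the system generated by a proper $\Delta_\beta$ of smallest cardinality. In Luks's terminology, which this paper inherits, a minimal block system is one with the \emph{fewest} blocks, i.e.\ one on which the induced action of $G$ on the set of blocks is primitive; its blocks are \emph{maximal} proper blocks. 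That is the opposite end of the lattice of $G$-invariant partitions from what you compute, and the difference is not cosmetic: the lemma is invoked in Algorithms \ref{CB(G)} and \ref{T(B,G))} in the form ``let $\{B_1,B_2\}$ be a minimal $G$-block system'', followed by taking the index-$2$ stabilizer $H$ of $B_1$ and some $\tau\in G\setminus H$. The groups there are transitive $2$-groups, for which a primitive quotient action has degree $2$, so the required system has exactly two blocks; your system instead has blocks of size $2$ (for a transitive $2$-group every minimal nontrivial block has size $2$), hence $|B|/2$ blocks. Concretely, for the regular action of the cyclic group of order $8$ your algorithm returns the four blocks $\{i,i+4\}$, whereas the paper needs $\bigl\{\{0,2,4,6\},\{1,3,5,7\}\bigr\}$; the recursion of Algorithm \ref{CB(G)} cannot be run on your output.

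The repair is standard and reuses exactly your machinery: after computing your finest nontrivial system $P$, let $G$ act on the set of blocks of $P$ (the images of the given generators on this quotient set are computable in polynomial time), and iterate your construction on the quotient action, repeating until the quotient becomes primitive; since $G$-invariant partitions of $B$ coarsening $P$ correspond to $G$-invariant partitions of the quotient, pulling the final partition back to $B$ yields a block system on which $G$ acts primitively, i.e.\ a minimal block system in the required sense. Each iteration at least halves the number of points, so only $O(\log |B|)$ rounds are needed and the whole procedure stays polynomial. With that amendment (and with the justification of the quotient correspondence spelled out), your argument would establish the lemma as it is actually used in the paper.
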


\begin{algorithm}[h] \label{CB(G)}\hypertarget{CB(G)}{}
\KwData{A coset $\sigma G \subseteq \Sym(A)$, where $A$ is a colored set and $G$ is a $2$-group, and a $G$-stable subset $B$ of $A$.}
 \KwResult{ $C_B( \sigma G)$}
 
\Begin{
\Case{$B = \{ b \} $}{\eIf{ $\sigma(b) \sim b$}{ $C_B ( \sigma G ) = \sigma G$}{ $C_B = \emptyset $}
     }
\Case{ $G$ is intransitive on $B$}{
    Let $B_1$ a nontrivial orbit \\
    $B_2 = B \setminus B_1$ \\
    $C_B ( \sigma G ) = C_{B_2} ( C_{B_1} ( \sigma G)) $ 
   }
\Case{ $G$ is transitive on $B$}{
      Let $\{B_1, B_2 \}$ a minimal $G$-block system \\
      Find the subgroup  $H$ of $G$ that stabilizes $B_1$ \\
      Let $\tau \in G \setminus H$ \\
      $C_B ( \sigma G) = C_{B_2} ( C_{B_1}  (\sigma H)) \cup C_{B_2} ( C_{B_1}  (\sigma \tau H))$ \\
       }

\Return{ $C_B( \sigma G)$}
}
\caption{$C_B (\sigma G ) $}
\end{algorithm}

\begin{theorem}\label{th:cost}
Algorithm \ref{IsotoAute} solves the ternary graph isomorphism in time $O(n^{10})$, with $n$ the size of the input graphs.
\end{theorem}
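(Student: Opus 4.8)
The plan is to take correctness for granted---it follows from the reduction described above together with Propositions~\ref{K} and~\ref{Tutte} and the two unlabelled propositions on $\pi_r(\Aut_e(X_{r+1}))$---and to concentrate entirely on the running time, which I would bound by multiplying the costs of the three nested procedures. Algorithm~\ref{IsotoAute} runs its outer loop over the edges $e_2\in E(X_2)$; since $X_2$ is ternary, $|E(X_2)|=O(n)$, and for each choice it builds $X$ in linear time and then calls Algorithm~\ref{Aute} once, followed by a scan of the returned generators, whose number is polynomial in $n$. Algorithm~\ref{Aute} in turn iterates over the $N\le|V(X)|=O(n)$ levels of the filtration $X_1\subseteq\cdots\subseteq X_N=X$, solving Problem~I and Problem~II at each level. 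Problem~I is cheap: by Proposition~\ref{K} one only compares the neighbour sets $f(v)$ of the $O(n)$ vertices of $V_{r+1}$ and emits the matching transpositions. Thus the whole cost is governed by Problem~II, i.e. by the color automorphism routine of Algorithm~\ref{CB(G)}, and the task reduces to bounding that routine and then multiplying by the $O(n)$ levels and the $O(n)$ outer iterations.

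The conceptual heart is the analysis of the recursion in Algorithm~\ref{CB(G)}, where the crucial input is Proposition~\ref{Tutte}: the acting group $G=\Aut_e(X_r)$ is a $2$-group. I would invoke the standard fact that a primitive permutation group which is a $2$-group is just $C_2$ acting on two points (its centre is a nontrivial normal, hence transitive, subgroup, so the group is regular and of prime order). Consequently, whenever $G$ is transitive on the current $G$-stable set $B$, there is a $G$-block system $\{B_1,B_2\}$ with exactly two blocks of size $|B|/2$; such a system exists because a point stabiliser in a $2$-group lies in a subgroup of index $2$, and it is produced in polynomial time by Lemma~\ref{LemaBlockSystem}. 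The stabiliser $H$ of $B_1$ has index $2$ and is computable by Lemma~\ref{LemaFilter}, while $\tau\in G\setminus H$ is found directly. The transitive case therefore replaces one instance of size $b=|B|$ by the two cosets $\sigma H$ and $\sigma\tau H$, each processed by the sequential composition $C_{B_2}\circ C_{B_1}$ on blocks of size $b/2$, and the intransitive case splits $B$ into orbits and recurses additively. This yields the recurrence $T(b)\le 4\,T(b/2)+p(b)$ with $p$ polynomial (the per-node cost of the orbit, block-system, stabiliser and coset computations, plus the $O(1)$ color test at a singleton $B$), whose solution has $O(b^2)$ recursion leaves and is polynomial in $b$; Lemma~\ref{LeftCoset} guarantees that every intermediate $C_{B_i}(\cdot)$ is again a single coset, so the representation stays compact throughout.

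It then remains to assemble the exponents. The colored set handed to Algorithm~\ref{CB(G)} is $A=A_r$ with $|A_r|=O(n^3)$ (subsets of $V_r$ of size at most three), while the permutations are stored through their action on $V(X_r)$, of degree $O(n)$; so the recursion on $b\le|A_r|$ contributes $O(b^2)=O(n^6)$ leaves, and each node costs a fixed polynomial in the degrees for the orbit, block and membership computations supplied by Lemmas~\ref{LemaFilter} and~\ref{LemaBlockSystem}. Multiplying the per-level color-automorphism cost by the $O(n)$ levels of Algorithm~\ref{Aute} and by the $O(n)$ iterations of the edge loop of Algorithm~\ref{IsotoAute}, and carrying out the routine accounting of the polynomial per-node factors, gives the stated bound $O(n^{10})$. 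The main obstacle I anticipate is exactly this accounting together with the justification that the recursion does not explode: everything hinges on the two-block splitting, so the delicate points are proving rigorously that a transitive $2$-group admits a two-block system returned by Lemma~\ref{LemaBlockSystem}, and checking that the induced degrees ($O(n)$ on vertices, $O(n^3)$ on $A_r$) keep every per-node operation polynomial, so that the $4$-way branching over a halving $B$ produces only $O(b^2)$ leaves rather than an exponential tree.
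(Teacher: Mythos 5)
Your proposal follows essentially the same skeleton as the paper's proof: correctness is taken from the reduction and the propositions, and the bound is obtained by multiplying the $O(n)$ iterations of the edge loop in Algorithm~\ref{IsotoAute}, the $O(n)$ levels of the filtration in Algorithm~\ref{Aute}, and the per-level cost, which is dominated by the color-automorphism computation of Algorithm~\ref{CB(G)} rather than by the kernel (Proposition~\ref{K}) or preimage steps. The real difference is how that dominant cost is treated. The paper simply \emph{asserts} that the reduction to Problem~\ref{Problem4} followed by Algorithm~\ref{CB(G)} costs $O(n^8)$ per level (delegating any justification to Luks' paper), which gives $O(n^9)$ for Algorithm~\ref{Aute} and $O(n^{10})$ overall. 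You instead unfold the recursion: you prove that a transitive $2$-group always admits a block system with exactly two blocks (via the primitive-$2$-group argument, equivalently the index-$2$ maximal subgroup containing a point stabilizer), derive the recurrence $T(b)\le 4\,T(b/2)+p(b)$, and bound the recursion tree by $O(b^2)$ nodes with $b\le |A_r|=O(n^3)$. This material is correct, is in substance Luks' own analysis which the paper never reproduces, and moreover justifies the paper's unexplained assumption in Algorithm~\ref{CB(G)} that the minimal block system is of the form $\{B_1,B_2\}$. The one loose point is your final accounting: to reproduce the stated exponent you need the per-level cost to be $O(n^8)$, i.e.\ roughly $O(n^2)$ work per recursion node, and that is not automatic when the ground set $A_r$ has size $O(n^3)$ and each node must at least compute orbits, blocks and stabilizers on its current subset; a naive bound along your lines yields a somewhat larger exponent. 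Since the paper itself fixes the $O(n^8)$ figure by fiat, your argument matches the paper's level of rigor everywhere except that you leave to ``routine accounting'' exactly the number the paper states outright.
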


\begin{proof}
The cost of  computing \texttt{BuildX} applied to a pair of graphs of size $n$ and a pair of edges is in $O(n)$ time.
For every $r$,
a set of generators of $\ker \pi_r$ can be computed as explained in Lemma \ref{K} in $O(n^2)$ time, a set of generators of  $ \pi_r ( \Aut_e(X_{r+1}))$ can be computed, by reducing this computation to Problem  \ref{Problem4} and then using Algorithm \ref{CB(G)}, in $O(n^8)$ time, and a set of preimages in $\Aut_e(X_{r+1})$ of the latter can be computed in $O(n^2)$ time. Since there are at most $O(n)$ indices $r$, we conclude that the cost of  computing  a set of generators of $\Aut_e(X)$ by means of Algorithm \ref{Aute} is in $O(n^9)$. Since Algorithm \ref{IsotoAute} calls Algorithm \ref{Aute} $O(n)$ times, the final cost of Algorithm \ref{IsotoAute} is in $O(n^{10})$.
\end{proof}

\section{Implementation}
\subsection{Improvements}
In our implementation we have improved the efficiency of Luks' algorithm by:
\begin{enumerate}[(a)]
 \item Reducing the size of the set $A_r$.
 \item Representing the groups by means of smooth generating sequences.
 \item Precomputing the blocks.
 \item Running initial tests, to avoid trivial cases.
 \item Removing, for every $r$, the permutations that do not swap the two ``parts'' $X_1$ and $X_2$ of $X$.
\end{enumerate}
With these improvements, the cost of the computation decreases to $O(n^4)$. We explain now in some detail  the improvements (a)--(c), which were inspired by \cite{GHL83}. For more details, see \cite{MasterThesis}.

\begin{figure}\label{Triplets}\hypertarget{Triplets}{ }
\centering
\includegraphics[scale=0.25]{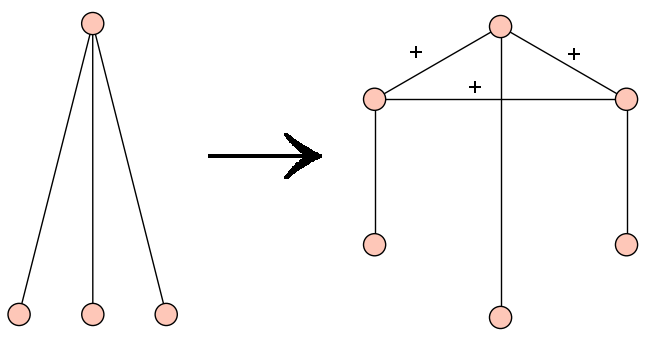}
\caption{Replacing the triplets in the neighbor sets}
\end{figure}

As far as improvement (a) goes, we have been able to remove the triplets from $B_r$, by replacing each node $v$ with a 3-elements neighbor set by a triangle with nodes at  ``level'' $r+1$ and labeled edges: cf. 
 \hyperlink{Triplets}{Figure \ref{Triplets}}. In this way, we can replace the graph $X$ by a new graph $\tilde{X}$ with some edges labeled. The automorphisms of $\tilde{X}$ must preserve labels, and therefore the computation of $\Aut_e(\tilde{X})$ is the same as that of $\Aut_e(X)$, except for the following facts: $B_r$ needs only to include the subsets of $V_r$ of size 1 or 2;  we split  $A'_r$  into 
$$
\begin{array}{l}
A'_{r,a}=  \big\{ \{v_1, v_2\} \in A_r\mid  ( v_1, v_2 ) \in  E(X_{r+1})\mbox{ is unlabeled} \big\},\\
A'_{r,b}=  \big\{ \{v_1, v_2\} \in A_r\mid  ( v_1, v_2 ) \in  E(X_{r+1})\mbox{ is labeled} \big\};
\end{array}
$$
and we modify the set of colors on $B_r$ to distinguish, on the one hand, whether or not an element is in $A'_{r,a}$, or $A'_{r,b}$, or neither, and, on the other hand, whether it is in $A_{r,1}$, or $A_{r,2}$, or neither.

As to (b), we represent 2-groups in a way that makes easier several key computations.  

\begin{definition}
Let $G$ be a 2-group generated by $\{g_1, \ldots, g_k \}$. The sequence $( g_1, \ldots, g_k )$ is a \emph{smooth 
 generating sequence} (\emph{SGS}, for short) for $G$ if $[ \langle g_1, \ldots, 
 g_i \rangle : \langle g_1, \ldots, 
 g_{i-1} \rangle] \leq 2$, for $i=1, \ldots, k$.
 \end{definition}

SGS are preserved by homomorphisms and liftings, and if we know a SGS
for a 2-group $G$, then it is easy construct an SGS for a subgroup $H$ of index 2  \cite{ GHL83}.

\begin{lemma}
Let $G$ be a 2-group, $( g_1, \ldots, g_k )$ a SGS for it, and $H$ a subgroup of index 2. Let $j = \min \{ i\mid  g_i \notin H \}$ and set, for $ i = 1, \ldots, k$,
 $$
\beta_i = \left\{ \begin{array}{lcl} g_i & \mbox{ if } & g_i \in H \\ g_j ^{-1} g_i & \mbox{ if } & g_i \notin H \end{array} 
  \right.
$$
Then $(\beta_1, \ldots, \beta_k )$ is a SGS for $H$, and this sequence is computed in $O(k)$ times the required time of a membership test for $H$.
\end{lemma}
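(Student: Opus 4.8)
The plan is to introduce the subgroup chains $G_i=\langle g_1,\dots,g_i\rangle$ and $H_i=\langle\beta_1,\dots,\beta_i\rangle$ (with $G_0=H_0=\{e\}$) and to deduce every assertion from the single identity $H_i=G_i\cap H$, proved by induction on $i$. Since $H$ has index $2$ in $G$ it is normal, so there is a quotient homomorphism $\phi\colon G\to\{\pm1\}$ with $\ker\phi=H$; thus $\phi(g_i)=1$ precisely when $g_i\in H$, and $j$ is the least index with $\phi(g_j)=-1$. The first thing I would record is that every $\beta_i$ lies in $H$: this is immediate when $g_i\in H$, and when $g_i\notin H$ one has $\phi(g_j^{-1}g_i)=\phi(g_j)^{-1}\phi(g_i)=(-1)(-1)=1$, so $\beta_i=g_j^{-1}g_i\in H$. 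I would also note $\beta_i\in G_i$ in all cases (as $j\le i$ whenever $g_i\notin H$), so at once $H_i\subseteq G_i\cap H$.

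Next I would prove $H_i=G_i\cap H$, treating $i\le j$ as the base. For $i<j$ all of $g_1,\dots,g_i$ lie in $H$, so $G_i\subseteq H$ and $\beta_\ell=g_\ell$ for $\ell\le i$, whence $H_i=G_i=G_i\cap H$. At the pivot $i=j$ one has $\beta_j=g_j^{-1}g_j=e$, so $H_j=H_{j-1}=G_{j-1}$; since $g_j\in G_j$ makes $\phi|_{G_j}$ onto, $|G_j\cap H|=|G_j|/2=|G_{j-1}|$, and as $G_{j-1}\subseteq G_j\cap H$ this forces $G_j\cap H=G_{j-1}=H_j$. For the inductive step I take $i>j$ with $H_{i-1}=G_{i-1}\cap H$. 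If $g_i\in G_{i-1}$ then $G_i=G_{i-1}$, so $G_i\cap H=H_{i-1}\subseteq H_i\subseteq G_i\cap H$ and equality is forced. If $g_i\notin G_{i-1}$ then smoothness of $(g_1,\dots,g_k)$ gives $[G_i:G_{i-1}]=2$, and I would check $\beta_i\notin G_{i-1}$: if $g_i\in H$ this is the hypothesis, while if $g_i\notin H$ then $\beta_i\in G_{i-1}$ would give $g_i=g_j\beta_i\in G_{i-1}$ (using $g_j\in G_{i-1}$), a contradiction. Hence $\beta_i\in(G_i\cap H)\setminus H_{i-1}$, and together with the elementary bound $[G_i\cap H:G_{i-1}\cap H]\le[G_i:G_{i-1}]=2$ (the map $(G_{i-1}\cap H)x\mapsto G_{i-1}x$ on cosets is injective) this yields $H_i=G_i\cap H$ with $[H_i:H_{i-1}]=2$.

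Taking $i=k$ in the identity gives $H_k=G\cap H=H$, so $(\beta_1,\dots,\beta_k)$ generates $H$; and the step bound $[H_i:H_{i-1}]=[G_i\cap H:G_{i-1}\cap H]\le[G_i:G_{i-1}]\le2$ obtained along the way is exactly smoothness of $(\beta_1,\dots,\beta_k)$. For the running time I would observe that producing the whole sequence reduces to locating $j$ and deciding each $g_i\in H$, i.e.\ at most $k$ membership tests in $H$, together with $O(k)$ group multiplications to form the products $g_j^{-1}g_i$ (precomputing $g_j^{-1}$ once). As these multiplications are absorbed into the cost of a membership test, the total is $O(k)$ times the cost of one membership test for $H$.

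I expect the main obstacle to be the index bookkeeping at the pivot $i=j$. There $g_j\notin G_{j-1}$ forces $G_j$ to genuinely double, yet $\beta_j=e$ leaves $H_j=H_{j-1}$ unchanged; the invariant $H_i=G_i\cap H$ reconciles this because $\phi|_{G_i}$ becomes onto exactly at $i=j$, so that $[G_j\cap H:G_{j-1}\cap H]=1$ despite $[G_j:G_{j-1}]=2$. Fitting this single index-dropping step into the same inductive statement as the generic steps, where the index stays $2$, is the delicate point; once the identity $H_i=G_i\cap H$ is framed correctly and the ``new generator lies outside $G_{i-1}$'' check is isolated to $i>j$, the remaining verifications are routine.
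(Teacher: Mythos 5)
Your proof is correct. Note that the paper itself omits the proof of this lemma (it states that all proofs are deferred to Luks' original paper, to \cite{GHL83}, and to the first author's Master Thesis), so there is no in-paper argument to compare against; judged on its own, your argument is complete and self-contained. The invariant $H_i = G_i \cap H$ is exactly the right induction hypothesis: it handles uniformly the three regimes (before the pivot, where $\beta_\ell = g_\ell$ and $G_\ell \subseteq H$; at $i=j$, where $\beta_j = e$ and the relative index drops to $1$; and after, where either $G_i = G_{i-1}$ or the new generator $\beta_i$ genuinely enlarges $H_{i-1}$), and the coset-injection bound $[G_i \cap H : G_{i-1}\cap H] \le [G_i : G_{i-1}]$ then delivers both generation (at $i=k$) and smoothness of $(\beta_1,\dots,\beta_k)$ in one stroke. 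The only point worth spelling out more explicitly is at the pivot: your equality $|G_j|/2 = |G_{j-1}|$ tacitly uses $[G_j : G_{j-1}] = 2$ (rather than $1$), which holds because $G_{j-1} \subseteq H$ while $g_j \notin H$, so $g_j \notin G_{j-1}$ and smoothness forces the index to be exactly $2$. The complexity claim is also fine: locating $j$ and classifying each $g_i$ costs at most $k$ membership tests, and the $O(k)$ multiplications by the precomputed $g_j^{-1}$ are dominated by that cost, matching the lemma's $O(k)$ bound.
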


Let us consider finally improvement (c). As we have seen in the proof of Theorem \ref{th:cost}, 
the most expensive part of Luks' algorithm is the recursive calls performed by Algorithm \ref{CB(G)}. The task carried out by this algorithm can be reorganized so as to limit the number of different blocks  visited. These blocks form a tree that is precomputed and guides the recursion.

\begin{definition}
Let $G$ be a 2-group acting on a stable subset $B$ of a colored set $A$. We call a binary tree $T$ a \emph{structure tree} for $B$ with respect to $G$, $T=T(B,G)$, if  the set of leaves of $T$ is $B$, and the action of any $\sigma \in G$ on $B$ can be lifted to an automorphism of $T$.
\end{definition}

Let $Q$ be a fixed color of $A$ (in our application, it will be the color of the elements of $A_r$ that do not belong to $A_r'\cup A_{r,1}\cup A_{r,2}$). A node $\tilde{B}$ of $T = T(B,G)$ will be called  \emph{active} if $\tilde{B} \cap Q^c \neq \emptyset$. An active node $\tilde{B}$ is \emph{facile} if $G$ is intransitive on $\tilde{B}$ and the latter has exactly one active child.  Let $\Delta ( \tilde{B} )$ denote the closest non facile descendant of $\tilde{B}$ in $T$.

We can precompute the entire structure tree $T(B,G)$ for a given pair  $(B,G)$ using Algorithm \ref{T(B,G))} below. With this algorithm,  we can construct every structure tree $T(B_r, G_r)$ in $O(n^2)$ time, and we can compute the mapping $\Delta$ and the set of active nodes in  $O(n \log n)$ time. 
Using structure trees to guide the recursion, the cost of computing $C_B(\sigma G)$ decreases to $O(n^4)$.

\begin{algorithm}\label{T(B,G))}\hypertarget{T(B,G)}{}
\KwData{$B, G$}
\KwResult{ $T=T(B,G)$}
\Begin{
Let the root of $T$ be $B$ \\
\If{$|B|=1$}{\Return{}}
Find the orbits of $G$ in $B$ \\
\If{$G$ is transitive}{
Find a minimal block system $\{ B_L, B_R \}$ for $G$ on $B$ \\
Find the subgroup $H$ of $G$ that stabilizes $B_L$ \\
Find $\tau \in G \setminus H$ \\
\Return{ $T = Tree(B_L,H) \cup \tau(Tree(B_L, H)$(linked to the new root $B$)}
}
\Else{Partition $B$ into two nontrivial $G$-stable subsets $B_L, B_R$ \\
\Return{$T= T(B_L, G) \cup T(B_R,G)$ (linked to the new root $B$)}}

}
\caption{$T(B,G)$}
\end{algorithm}

\subsection{Implementation details}
We have implemented Luks' algorithm using the language Python and some specific SAGE libraries for handling  graphs and groups of permutations.

Besides the obvious classes and functions necessary to implement the algorithm, we use a new class, called \texttt{Node}, for the structure tree. This class  has four  attributes:
\begin{itemize}
 \item \texttt{Node},   the content of the node.
 \item \texttt{Left},   the left child.
 \item \texttt{Right},  the right child.
 \item \texttt{Parent},   the parent.
\end{itemize}
When we build a new \texttt{Node} without  some attributes, they will be empty arrays. This class has the necessary functions  to modify its attributes, as well as  the following functions:
\begin{itemize}
 \item \texttt{IsLeaf()}, to know whether the node is a leaf.
 \item \texttt{Istransitive()}, to know whether $G$ is transitive on the node.
 \item \texttt{Isactive(Q)}, to know whether the node is active w.r.t. $Q$.
 \item \texttt{Isfacile(Q)}, to know whether the node is facile w.r.t. $Q$.
 \item \texttt{Delta(Q)}, to know the nearest non facile descendant of the node.
\end{itemize}

To test whether two graphs are isomorphic or not, we can use two different functions, \texttt{Isomorphism} and \texttt{Isomorphism2}. Both functions answer the question whether the input graphs are isomorphic, but moreover
the first function returns the whole group of automorphisms that fix the distinguished  edge $e$ of the graph  $X=\mathtt{BuildX}(X_1, X_2, e_1, e_2)$, while the second one only returns the subgroup of those automorphisms of $X$ that  swap the parts $X_1$ and $X_2$. 

Finally, we have adapted Luks' algorithm to test the isomorphism of  fully resolved rooted phylogenetic networks  on a given set of taxa. A \emph{rooted phylogenetic network}, on a set of taxa $S$ is a rooted, directed, acyclic graph with its leaves bijectively labeled in $S$. These graphs are used as explicit models of  evolutionary histories that, besides mutations, include reticulate evolutionary events like genetic recombinations, lateral gene transfers or hybridizations.  An evolutionary network is \emph{fully resolved} , or binary, when, for every node $v$ in it, the ordered pair $(d_{in}(v),d_{out}(v))$ is either $(0,2)$ (the root), $(1,0)$ (the leaves), $(1,2)$ (the \emph{tree nodes}) or $(2,1)$ (the \emph{reticulate nodes}). Two evolutionary networks on $S$ are \emph{isomorphic}  when they are isomorphic as directed graphs and the isomorphism preserves the leaves' labels. For more on phylogenetic networks, see \cite{Huson}. 

The main difference between the case of rooted phylogenetic networks and the general case is that, in the former, the isomorphisms map the root to the root, and therefore the graph $X=\mathtt{BuildX}(X_1, X_2)$ can be simply obtained by connecting the roots by an edge $e$. In particular,  Algorithm \ref{IsotoAute} needs not to  call  $O(n)$ times Algorithm \ref{Aute}, but only once, and the resulting cost is then in $O(n^3)$.

To test the isomorphism of fully resolved rooted phylogenetic networks we have defined the   function \texttt{IsomorphismPhilo}, which has the following parameters:
\begin{itemize}
 \item \texttt{X1}, the first phylogenetic network.
 \item \texttt{X2}, the second phylogenetic network.
 \item \texttt{n},  the number of nodes of X1.
 \item \texttt{dic1}, the dictionary with the labels of the nodes of the first graph.
 \item \texttt{dic2}, the dictionary with the labels of the nodes of the second graph.
 \item \texttt{r1},  the root of the first graph.
 \item \texttt{r2}, the root of the second graph.
\end{itemize}
\texttt{IsomorphismPhilo}  accepts networks with \emph{inner taxa} (with internal labeled nodes) as well as \emph{multilabeled networks}  (where different nodes can have the same label), although these features are not used (yet) in the phylogenetic networks literature. For the moment, the algorithm only accepts phylogenetic networks created with SAGE's function \texttt{Graph}, but in the near future we plan to adapt it so that it accepts networks described in the format eNewick \cite{eNewick}.

We have performed several tests on our implementation, which we report in the next subsection.
In the first and the third tests, we  have used SAGE functions  to generate random graphs and we have then made these graphs  connected and  trivalent, by randomly adding or removing edges when it was necessary. In the second test, we have used the function \texttt{graphs.DegreeSequence} that, given a sequence of degrees, returns a graph whose nodes have this sequence of degrees, if some exists. This has allowed us to know the  probability that the graphs under comparison were isomorphic.

To perform the tests on phylogenetic networks we defined some functions to create random fully resolved rooted phylogenetic networks:
\begin{itemize}
 \item\texttt{createDic(nodes,n)}, which, given a set of nodes, returns the following dictionary:
 \[
  d(i) = \left\lbrace \begin{array}{ll} \mbox{``\emph{i}''} & \mbox{ if }i \in nodes \\ \mbox{``\ ''} & \mbox{ if }i \notin nodes \end{array} \right.
 \]
 \item \texttt{createDic2(nodes,n)}, as above, but the selected nodes are labeled ``a'' or ``b''  equiprobably.
 \item \texttt{leaves(X1)}, returns the set of nodes of the network X1.
\end{itemize}
Then, we have developed an algorithm \texttt{RandomTree(n)} that returns a fully resolved rooted phylogenetic network with $n$ nodes. By default, each internal node has a probability of $0.5$  of being hybrid, but it can be changed by simply changing the probability parameter from $0.5$ to the desired probability.

The documentation of the whole module can be found in \url{http://www.alumnos.unican.es/aam35/sage-epydoc/index.html} and the code in \url{http://www.alumnos.unican.es/aam35/IsoTriGraph.py}.

\subsection{Tests}

The first two examples show that the  code works correctly, and then the tests  prove that it runs in a reasonable time, comparable to the speed of  the own SAGE algorithm to test isomorphisms of graphs.

\begin{example}\label{ImpTestEx1}
Consider the graphs $X_1$  and $X_2$ depicted in Fig. \ref{Example1}. These graphs are created in SAGE with:\begin{verbatim}
 sage: X3=Graph([(1, 7), (1, 10), (2, 3), (2, 4), (3, 4),(4, 9), 
       (5,6),(6, 8), (7, 8), (7, 9),(8, 9)])
 sage: X4=Graph([(2, 3), (2, 10), (1, 7), (1, 4), (7, 4),(4, 9), 
       (5, 6),(6, 8), (3, 8), (3, 9),(8, 9)])
\end{verbatim}

  \begin{figure}[h]
\centering
 \includegraphics[scale=0.4]{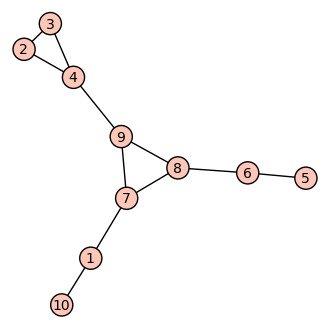}\qquad
\includegraphics[scale=0.4]{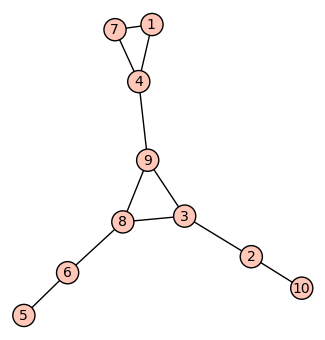}
 \caption{\label{Example1}
 The  graphs $X_1$ (left) and $X_2$ (right) in Example \ref{ImpTestEx1}}
\end{figure}

We test whether they are isomorphic: 
\begin{verbatim}
sage: Isomorphism2(X3,X4,10,Iso=True)
1 --> 2
2 --> 1
3 --> 7
4 --> 4
5 --> 5
6 --> 6
7 --> 3
8 --> 8
9 --> 9
10 --> 10
True
\end{verbatim}
And, indeed, it is obvious that this is an isomorphism between $X_1$ and $X_2$.
\end{example}

\begin{example}\label{ImpTestEx2}
Consider now the graphs $X_1$  and $X_2$ depicted in Fig. \ref{Example2}. 
\begin{verbatim}
sage: X1=Graph([(1, 7), (1, 8), (1, 10), (2, 3), (3, 6),
      (4, 5), (5, 6), (6, 10), (7,9), (7, 10), (8, 9)])
sage: X2=Graph([(1, 7), (1, 9), (2, 3), (2, 5), (2, 10), 
      (4, 5), (4, 6), (4, 10), (6,8), (7, 8), (7, 10)])
sage: Isomorphism(X1,X2,10)
False  
\end{verbatim}
And, indeed, they are clearly non isomorphic.

\begin{figure}[h!]
\centering
 \includegraphics[scale=0.4]{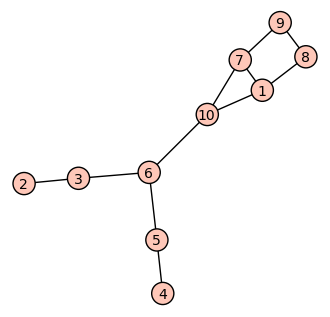}\qquad
\includegraphics[scale=0.4]{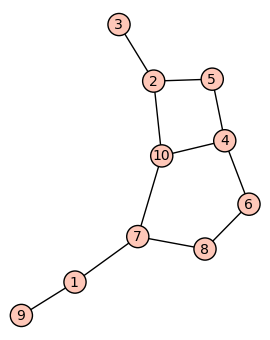}
 \caption{\label{Example2}
 The  graphs $X_1$ (left) and $X_2$ (right) in Example \ref{ImpTestEx2}}
\end{figure}
\end{example}

\begin{test}
Fig. \ref{Graphic1} shows the time needed by \texttt{Isomorphism2} to test whether two random graphs are isomorphic, as a function of the numbers of nodes in the graphs. The times are so small because two random graphs with the same number of nodes have  probably different numbers of edges, a property that our program checks before proceeding with Luks' algorithm. Although in most cases the algorithm detected non-isomorphism by trivial reasons, the algorithm is also relatively efficient in non trivial cases.
\begin{figure}[h!]
\centering
\includegraphics[scale=0.3]{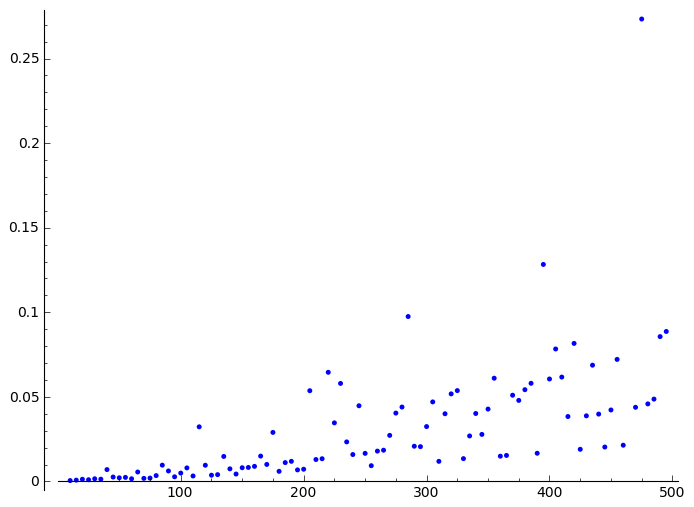}
\caption{Average time (in seconds) for different numbers of nodes when testing the isomorphism of random graphs}\label{Graphic1}
\end{figure}
\end{test}

\begin{test}
In this test we fix the degrees of $n-1$ nodes in the graphs, and the degree of the last node is chosen at random. In this way we guarantee that the probability of two graphs being isomorphic is $1/3$. In this case,  our algorithm also runs in reasonable time: see  Figs. \ref{Graphic2} and  \ref{Graphic3}.

\begin{figure}[h!]
\centering
\includegraphics[scale=0.3]{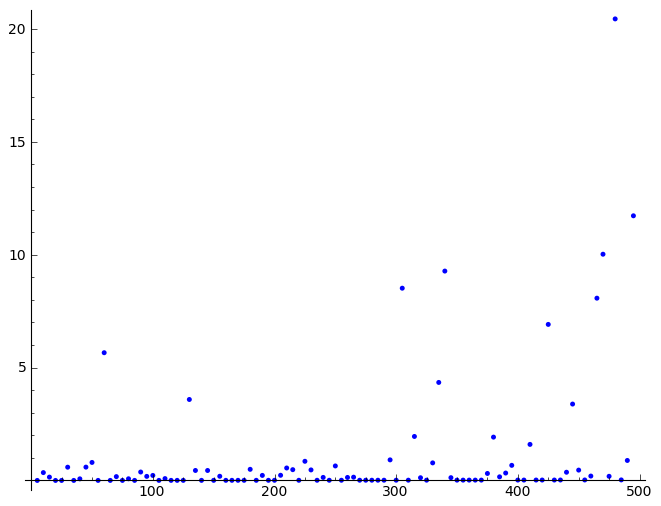}
\caption{Average time (in seconds) for different numbers of nodes when testing the isomorphism of semirandom graphs}\label{Graphic2}
\end{figure}

\begin{figure}[h!]
\centering
\includegraphics[scale=0.3]{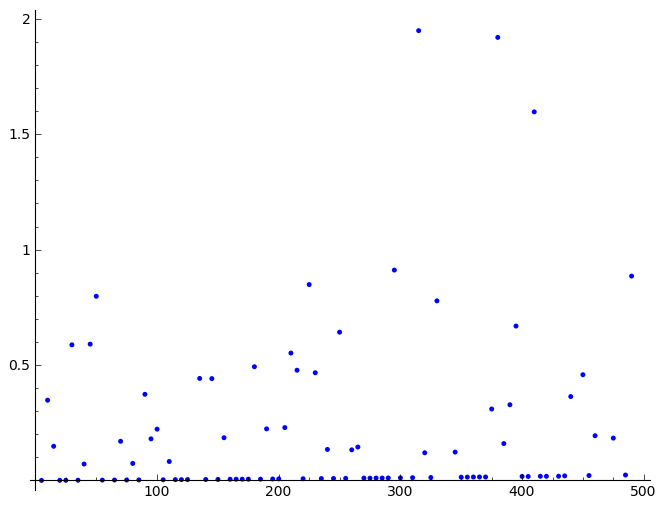}
\caption{The band under 2 seconds of the graphic in Fig. \ref{Graphic2}}\label{Graphic3}
\end{figure}
\end{test}

\begin{test}
Fig. \ref{Graphic4} shows the times (red dots) needed by the algorithm to detect the isomorphism between pairs of isomorphic graphs with $n$ nodes,  and it compares this time with  the  functions $(n/10)^4$, $(n/10)^3$, $(n/10)^2\log (n/10)$, and $(n/10)^2$.

\begin{figure}[h!]
\centering
\includegraphics[scale=0.3]{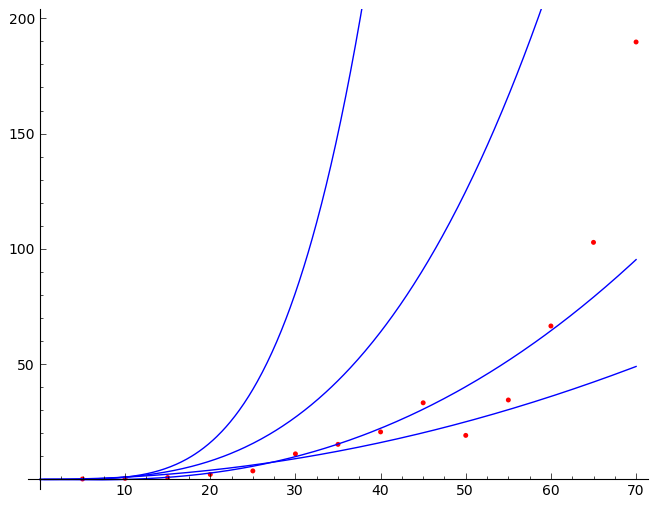}
\caption{Comparison between the algorithm and the functions $(n/10)^4, (n/10)^3, (n/10)^2 \log (n/10), (n/10)^2$}\label{Graphic4}
\end{figure}
\end{test}

Our last test deals with rooted phylogenetic networks.
\begin{test}
Fig. \ref{Graphic5} displays the relation running time-number of nodes for our algorithm when applied to random fully resolved rooted phylogenetic networks  on the same sets of taxa, and  Fig. \ref{Graphic6} shows this relation 
for phylogenetic networks that are isomorphic as undirected graphs but need not be isomorphic as phylogenetic networks.

\begin{figure}[h!]
 \centering
 \includegraphics[scale=0.3]{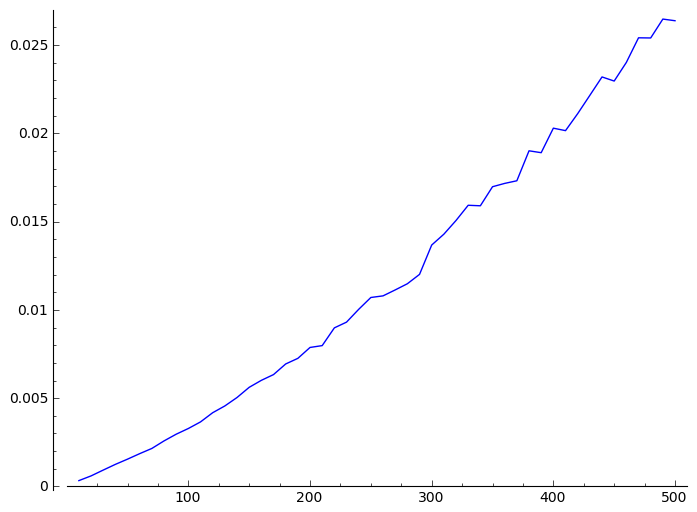}
\caption{Average time (in seconds) for different numbers of nodes when testing the isomorphism of fully resolved rooted phylogenetic networks}
\label{Graphic5}
 \end{figure}

\begin{figure}[h!]
 \centering
 \includegraphics[scale=0.3]{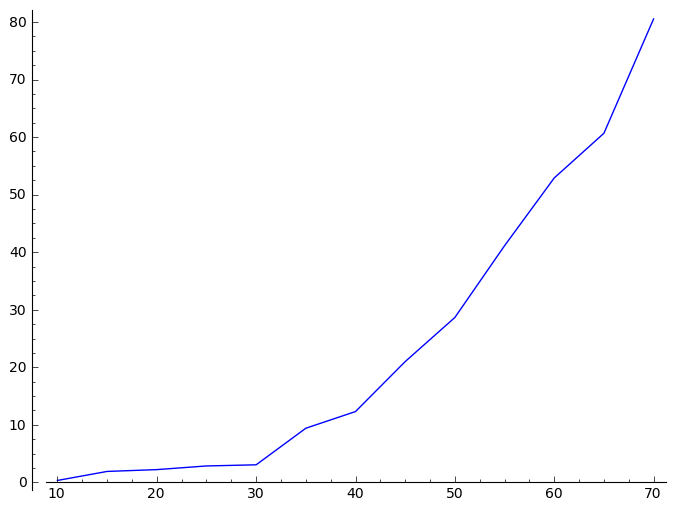}
\caption{Average time (in seconds) for different numbers of nodes when testing the isomorphism of fully resolved rooted phylogenetic networks  that are isomorphic as graphs}\label{Graphic6}
 \end{figure}
 \end{test}

\section{Conclusions}
In this paper we have presented  our implementation in SAGE of Luks' polynomial-time algorithm for testing the isomorphism of ternary graphs. This algorithm has interesting applications in phylogenetics,  as it allows, for instance, to detect whether two fully resolved rooted phylogenetic networks are isomorphic. Therefore, we have adapted and implemented Luks' algorithm for this type of graphs.  Our adaptation has been, except for one point, a direct translation of Luks' algorithm. Fully resolved rooted phylogenetic networks have specific characteristics that could be used to improve the algorithm to make it still more efficient in this specific application. It is in our research agenda to develop such an adaptation, and we hope to present it elsewhere. 
\medskip
 
\noindent\textbf{Acknowledgements.}  This  research has been partially supported by the Spanish government and the UE FEDER program, through project MTM2009-07165.  We thank Prof. T. Recio for encouraging us to write this report, and for his comments on a first version of it.

\end{document}